\documentclass[conference]{IEEEtran}

\usepackage{graphicx}
\usepackage[all]{xy}
\usepackage{amsmath}
\usepackage{amssymb}
\usepackage{amsthm}
\usepackage{proof}
\usepackage{cite}
\usepackage{hyperref}
\usepackage{url}

\newtheorem{theorem}{Theorem}
\newtheorem{lemma}[theorem]{Lemma}

\newtheorem{proposition}[theorem]{Proposition}
\newtheorem{definition}[theorem]{Definition}
\newtheorem{example}[theorem]{Example}
\newtheorem{remark}[theorem]{Remark}

\newcommand{\base}[1]{\mathcal{#1}}
\newcommand{\supp}{\Vdash}
\newcommand{\baseext}{\supseteq}
\newcommand{\xacml}{XACML}
\newcommand{\asp}{ASP}
\newcommand{\enc}[1]{\ulcorner #1 \urcorner}

\begin{document}

\title{Verification of Robust Properties \\ for Access Control Policies}

\author{
  \IEEEauthorblockN{Alexander V. Gheorghiu}
  \IEEEauthorblockA{
    School of Electronics and Computer Science\\
    University of Southampton, Southampton, UK\\
    \textit{and}\\
    Department of Computer Science\\
    University College London, London, UK\\
    Email: a.v.gheorghiu@soton.ac.uk
  }
}

\maketitle

\begin{abstract}
Existing methods for verifying access control policies require the policy
to be complete and fully determined before verification can proceed,
but in practice policies are developed iteratively, composed from
independently maintained components, and extended as organisational
structures evolve. We introduce \emph{robust property verification} ---
the problem of determining what a policy's structure commits
it to regardless of how pending decisions are resolved and regardless of
subsequent extension --- and develop a semantic framework based on
\emph{base-extension semantics} to address it. We define a support
judgment $\Vdash_{\base{P}} \phi$, stating that policy $\base{P}$ has robust property $\phi$, with connectives for implication,
conjunction, disjunction, and negation, prove that it is
\emph{compositional} (verified properties persist under policy extension
by a monotonicity theorem), and show that despite quantifying universally
over all possible policy extensions the judgment reduces to proof-search
in a second-order logic programming language. Soundness and completeness of
this reduction are established, yielding a finitary and executable
verification procedure for robust security properties.
\end{abstract}

\begin{IEEEkeywords}
  access control, proof-theoretic semantics, base-extension semantics,
  robust verification, logic programming, XACML, compositionality
\end{IEEEkeywords}

% ============================================================================
\section{Introduction}
\label{sec:intro}

Access control is one of the most fundamental mechanisms in computer
security. It governs which subjects may perform which actions on which
resources: whether a user may read a file, whether a process may invoke
a service, whether a principal may delegate authority to another. Every
serious computing system depends on access control, and the consequences
of getting it wrong range from data exposure to privilege escalation to
the complete undermining of a system's security guarantees. As systems
have grown in scale and complexity, the policies that implement access
control have grown correspondingly: from simple matrices associating
users with permissions, to rich, structured languages such as
\xacml{}~\cite{xacml-standard} that support hierarchical policy sets,
combining algorithms, attribute-based conditions, and dynamic
delegation. This expressiveness is necessary. It is also a source of
error.

Automated verification of access control policies has therefore become
an active concern. The central question is whether a given policy
satisfies a given security property: that no subject can both author and
review the same paper, that a deny decision from any component overrides
all permits, that delegation cannot circumvent a conflict-of-interest
constraint. Two broad approaches have emerged. Model-theoretic
methods---using Alloy~\cite{jackson-alloy}, binary decision
diagrams~\cite{verification-change-impact}, or SMT
solvers~\cite{bouchet2020}---translate policies into a mathematical
model and check whether the property holds in that model. 

Logic-based
methods represent policies as logical theories and properties as
formulae, reducing verification to theorem-proving or
satisfiability~\cite{ramli-logic-xacml,ahn-asp-xacml}. Both traditions
have produced capable tools: Margrave checks change-impact for \xacml{}
policies using decision diagrams; Zelkova verifies trust-safety
properties of AWS S3 policies using SMT; Answer Set Programming encodes
\xacml{} under stable model semantics~\cite{gelfond-lifschitz-stable}
and queries the resulting program. These tools address the verification
problem as it has traditionally been posed, and they address it well.

The traditional formulation, however, rests on an assumption that is
rarely made explicit: that the policy is \emph{complete}. That is, the policy must be fully specified before
verification can begin. A property is
verified by checking whether it holds in the model that the policy
defines, or by checking whether it is a consequence of the theory the
policy determines. In the model-theoretic tradition, a pending
decision---an unresolved role assignment, a delegation not yet
settled---has no natural representation: models are complete structures,
and an incomplete policy does not determine one. In the logical
tradition, the situation is similar: \asp{} verification proceeds by
computing stable models, and each stable model commits to specific
resolutions of any alternatives present in the rules. Reasoning about
what holds across all resolutions still requires enumerating them.

In practice, policies are developed iteratively, composed from components
maintained by different parties, and extended as organisational
structures evolve. A security administrator may need to enforce a
constraint before the relevant
assignments have been finalised. The policy's rules may already
determine that the constraint will hold, whatever resolution is
eventually chosen, but existing tools cannot say so. They can only check
particular completed instances, one at a time. When pending decisions
multiply, this becomes intractable; and when the policy is later
extended, the verification work must be redone from scratch.

What is missing, then, is a framework in which verification addresses
not a fixed completed policy but the policy's commitments
under hypothetical extension. The right questions are not only ``does
this property hold in the current policy?'' but ``does the policy's
structure guarantee this property no matter how it is developed?''. These are questions about the behaviour of a
policy under hypothetical change---about what follows not from any
particular completion, but from the policy's rules themselves. Answering
them requires new semantic and formal methods: a semantics that
interprets properties not in static models but across extensions of a
policy, and a logic whose connectives express commitment
rather than truth in a completed state.

Three requirements constrain what such a framework must look like. First, it should be \emph{inferential}: security properties should be grounded in the policy's derivation structure rather than in models of a completed state, so that pending decisions are not treated as gaps to be filled but as constraints to be propagated. Second, it should be \emph{compositional}: a property verified against a policy should remain verified when the policy is extended with new rules, so that administrators are not forced to re-verify from scratch each time an assignment is finalised or a component is updated. Third, it should be \emph{finitary}: although robustness is defined by quantifying over all possible extensions of a policy ---an infinite family --- verification should reduce to a decidable procedure, so that the framework is usable in practice rather than merely in principle. 

This paper develops such a framework using \emph{proof-theoretic semantics}~\cite{schroeder-heister-pts}. This raises the problem of \emph{robust
properties}: security constraints guaranteed to hold regardless of how
pending decisions are resolved. To define it formally, we use a support
$\supp_\base{P} \phi$ that says $\phi$ is a robust property of policy $\base{P}$. The judgment is \emph{compositional}:
verified robust properties persist under policy extension without
re-verification, by a monotonicity theorem. We show that robust property
verification reduces to proof-search in a second-order logic programming
language, and prove soundness and completeness. Our framework targets an
earlier stage of the policy lifecycle than tools such as Zelkova and
Margrave, where decisions remain pending and compositional guarantees
are needed before the policy is finalised.

\subsection*{Contributions}

\begin{itemize}
\item We identify \emph{robust property verification} for access control policies
  as a proof-theoretic problem.
\item We define the support judgment $\Vdash_{\base{P}} \phi$ for fulfilling the three desiderata of inferentiality, compositionality, and
  finitariness.
  
\item We show that robust property verification reduces to proof-search in a
  second-order logic programming language and prove \emph{soundness and
  completeness} (Theorems~\ref{thm:soundness} and~\ref{thm:completeness}).
\end{itemize}

\subsection*{Organisation}

Section~\ref{sec:background} situates the work in the context of existing approaches to access control verification and introduces the conference management system that serves as our running example throughout. Section~\ref{sec:robust} develops the semantic framework: we define the support judgment $\supp_\base{P} \phi$, introduce its connectives for implication, disjunction, conjunction, and negation, and establish the monotonicity theorem that grounds compositional verification. Section~\ref{sec:operational} turns to computation: we introduce the Horn clause policy language, define the goal encoding $\enc{\cdot}$, and prove the soundness and completeness theorem reducing robust property verification to proof-search. Section~\ref{sec:conclusion} discusses related and future work.

\section{Background}
\label{sec:background}

Access control policies specify which subjects may perform which actions
on which resources. To verify properties of such policies, one has to represent them and the properties. To this end, symbolic logic has been a natural tool for their 
analysis~\cite{abadi-calculus,abadi-survey}.

\subsection{Logic and Access Control Policies}

Whatever their surface syntax, policies share a
common structure: they consist of rules that derive authorisation
decisions from facts about subjects, roles, resources, and their
relationships.  This inferential structure is naturally captured by a
logic program --- typically, a set of Horn clauses whose proof-search corresponds
to the policy's decision procedure.

Some policy formalisms make this logical structure explicit from the
outset. The modal logic of principals due to Abadi, Burrows, Lampson,
and Plotkin~\cite{abadi-calculus} treats authorisation as derivation
directly: a request is granted if and only if it follows by a formal
proof from the policy's axioms about trust and delegation. Role-based
access control models have been given first-order and deontic
formalisations in the same spirit~\cite{samarati-vimercati,nist2016}.
Abadi's survey~\cite{abadi-survey} traces several such approaches and
their shared commitment to the paradigm that authorisation is proof.

Other policy languages require translation. \xacml{}~\cite{xacml-standard}
is an industrial standard: an XML-based language in which a
policy decision point evaluates hierarchically structured
\emph{PolicySets}, \emph{Policies}, and \emph{Rules} against access
requests to produce a decision of \texttt{Permit}, \texttt{Deny},
\texttt{NotApplicable}, or \texttt{Indeterminate}, with multiple
applicable rules reconciled by \emph{combining algorithms} such as
deny-overrides. Despite its XML syntax, \xacml{} has a well-understood
logical content: Ramli et al.~\cite{ramli-logic-xacml} derive a
many-sorted first-order logic that precisely captures \xacml{} 3.0, and
Ahn et al.~\cite{ahn-asp-xacml} show that \xacml{} policies can be
translated into Answer Set Programs under stable model
semantics~\cite{gelfond-lifschitz-stable}. Either way, the policy
becomes a logic program and verification becomes proof-search.

This paper takes that logical representation as its starting point. We
assume that the policy's inferential structure has been expressed as a
logic program --- whether it originated as one or was translated into one
--- and ask a question that existing approaches cannot answer: what
properties does the policy's inferential structure \emph{guarantee},
regardless of how pending decisions are resolved? The rest of the paper
develops the framework for answering this question.

\subsection{Verification of Access Control Policies}
\label{sec:related-work}

The verification of properties for access control policies has been
approached from various semantic and computational perspectives. We take
\xacml{} as our running example and reference point throughout.

On the model-theoretic side, Hughes and Bultan~\cite{hughes-bultan-2004}
translate \xacml{} into Alloy for SAT-based bounded model checking. Fisler
et al.~\cite{verification-change-impact} developed Margrave using
Multi-Terminal Binary Decision Diagrams, supporting change-impact analysis
between policy versions. Bouchet et al.~\cite{bouchet2020} deployed
Zelkova at AWS using SMT to verify trust-safety properties at scale. All
of these verify properties by checking whether they hold in a complete
model of the policy; they cannot express that a property holds across all
resolutions of a pending decision without enumerating each resolution
explicitly, and policy extension invalidates the model.

On the proof-theoretic side, Ahn et al.~\cite{ahn-asp-xacml} and Ramli
et al.~\cite{ramli-logic-xacml} work directly with the logical
representation of \xacml{} policies, reducing verification to ASP solving
or first-order theorem-proving. These approaches are closer in spirit to
ours, but they are still non-compositional: verification computes models
of a fully-determined program, every extension demands a fresh
computation, and reasoning about pending decisions requires enumerating
the alternatives case by case.

Both families of approach are effective when the policy is \emph{complete} and
fully specified. What they cannot do is reason about what a policy's
inferential structure \emph{commits} to before all decisions have been
resolved --- the problem this paper addresses. We take the logic program
representation of a policy as given and ask what it structurally
guarantees across all possible extensions. The support judgment $\supp_P
\phi$ answers this question, and its monotonicity
(Proposition~\ref{prop:monotonicity}) makes the framework compositional:
verified properties persist under extension without re-verification. Our
approach targets the stage of the policy lifecycle before decisions are
finalised, and is complementary to tools such as Zelkova and Margrave
that operate on fully-specified deployed policies.

\subsection{Running Example: Conference Management}
\label{sec:conference-example}

We use a conference paper review system as our running example. This
benchmark has been studied extensively in the verification
literature~\cite{zhang-synthesis,guelev-model-checking,verification-change-impact}
because it exhibits the full range of challenges that arise in realistic
policies: role-based access control, conflict-of-interest constraints,
delegation, and dynamic permission assignment.

\begin{example}[Conference Management System]
\label{ex:conference}
The system involves \emph{Authors} (who submit papers), \emph{PC Members}
(who review papers), a \emph{Chair} (who manages the review process), and
\emph{Sub-reviewers} (to whom PC members may delegate reviews). Key
constraints to verify include:
\begin{itemize}
\item \emph{Conflict-of-interest}: no one can review a paper they
  co-authored;
\item \emph{Separation of duty}: the Chair should not also be an author;
\item \emph{Review confidentiality}: reviewers cannot see others' reviews
  before submitting their own;
\item \emph{Delegation safety}: sub-reviewers must not have conflicts of
  interest.
\end{itemize}
As Fisler et al.\ note, a real implementation of such a system revealed
``role-conflict errors during testing''~\cite{verification-change-impact},
demonstrating the practical importance of formal verification.
\end{example}

% ============================================================================
\section{Specification of Robust Properties}
\label{sec:robust}
% ============================================================================

We now turn to the central problem of the paper: robust access control properties. The heuristic idea is simple --- a 
property is robust if it holds no matter how the policy is completed 
or extended. The purpose of this section is to make that precise in 
a way that supports formal verification.

To this end, we define a judgment 
\[
\supp_\base{P} \phi
\]
that says that the robust property $\phi$ holds of the policy $\base{P}$. We introduce connectives for implication, disjunction and conjunction, showing how
each captures a distinct and independently important aspect of robust
policy reasoning. Finally, we define negation as a limiting case and
collect the full set of clauses. 

\subsection{Support, Derivability, and Policy Extension}
\label{sec:support-basic}

We work over a set of ground atoms $\mathbb{A}$ representing
access-relevant statements: facts such as $\mathit{pcMember}(\mathit{alice})$,
$\mathit{author}(\mathit{alice}, p_{007})$, and
$\mathit{conflicted}(\mathit{alice}, p_{007})$ in the conference
management system of Example~\ref{ex:conference}. We write
$\vdash_{\base{P}} A$ to mean that the atom $A$ is derivable from
policy $\base{P}$ by application of its rules: for instance,
$\vdash_{\base{C}} \mathit{conflicted}(\mathit{alice}, p_{007})$ holds
in the program $C$ of Example~\ref{ex:conference} because
Alice is a PC member who authored $p_{007}$, and the conflict rule
fires. 

This derivability relation is the ground floor of the
semantics: it captures what the policy can establish right now,
without any hypothetical extension. We lift it to the support judgment
for atomic formulas by setting
\[
  \supp_\base{P} A \quad\text{iff}\quad \vdash_\base{P} A
  \qquad\text{(for atomic } A\text{).}
\]
The challenge is to extend support to compound security properties ---
conditions that are not themselves derivable from the policy in its
current state but that the policy's rules nonetheless \emph{commit} it
to. That commitment is expressed through the policy's behaviour under
hypothetical extension.

\paragraph{Policy extension.}
We write $\base{Q} \baseext \base{P}$ to say that policy $\base{Q}$
\emph{extends} policy $\base{P}$. That is, the rules of $\base{P}$ are a subset of the
rules of $\base{Q}$, and $\base{Q}$ may add arbitrarily many new rules.
Extension is monotone: anything derivable in $\base{P}$ remains
derivable in any $\base{Q} \baseext \base{P}$, reflecting the
assumption that policy rules, once established, are not retracted. The
set of all policies extending $\base{P}$ represents the space of all
possible ways the policy might be completed or elaborated in the future.

The conditional support judgment $\Gamma \supp_\base{P} \phi$ says that
$\phi$ is a robust consequence of $\Gamma$ over $\base{P}$: in any
extension of the policy that makes every formula in $\Gamma$ hold,
$\phi$ holds as well. Formally:
\[
\begin{array}{lcl}
  \Gamma \supp_\base{P} \phi
   &\text{iff} &
  \text{for any } \base{Q} \baseext \base{P},\;\\ & &
  \text{if } \supp_\base{Q} \psi \;\text{for all } \psi \in \Gamma,
  \text{ then } \supp_\base{Q} \phi.
\end{array}
\]
The quantification is over \emph{all} extensions --- we are reasoning
about inferential commitments across the entire future policy lifecycle,
not about any particular completion.

\paragraph{Implication.}
 We internalise conditional support as a robust implication, writing
 \[ 
 \supp_\base{P} \phi \Rightarrow \psi 
 \qquad \mbox{iff} \qquad \phi \supp_\base{P} \psi
 \]
That is,  the policy supports $\phi \Rightarrow \psi$ when every extension that
supports $\phi$ also supports $\psi$. This is not classical material
implication, which would hold vacuously whenever $\phi$ is currently
unprovable. It is an inferential conditional: the policy's rules
\emph{commit} it to $\psi$ whenever $\phi$ is established, regardless
of when or how that establishment occurs.

\begin{example}[Delegation implies authority check]
\label{ex:implication}
Let $\base{P}$ contain the rule that a PC member can delegate a paper
to a sub-reviewer only if the sub-reviewer is not conflicted. The
robust implication
\[
  \supp_\base{P}\;
  \mathsf{delegate}(m, r, p)
  \Rightarrow \neg\mathsf{conflicted}(r, p)
\]
holds because any extension of $\base{P}$ that derives a delegation
fact must, by the delegation rule, also derive the absence of conflict.
This property is not merely currently true; it is structurally
guaranteed to persist through any future extension. The hypothesis
$\mathsf{delegate}(m,r,p)$ need not be in the policy's current
database for the guarantee to hold.
\end{example}

\subsection{Disjunction: Reasoning Under Pending Alternatives}
\label{sec:disjunction}

A recurring situation in policy development is that administrators must
enforce security constraints before all role assignments have been
finalised. The conference organizers know that one of
$\{\mathsf{Alice}, \mathsf{Bob}, \mathsf{Carol}\}$ will serve as
program chair, but the appointment is pending. The separation-of-duty
constraint --- that the chair must not be an author --- must be
verified before the appointment is made. Existing approaches cannot express this internally. 

What is needed is a way to express that all security-relevant
consequences that would follow from each alternative individually
already follow from the policy itself, without committing to any
particular resolution. We call this \emph{robust disjunction}:
\[
\begin{array}{lcl}
\supp_\base{P}\, \phi \oplus \psi & \mbox{iff} & \text{for any 
$\base{Q} \baseext \base{P}$ and any atomic $A$,}\\ & &\text{if $\phi
\supp_\base{Q} A$ and $\psi \supp_\base{Q} A$, then $\vdash_\base{Q}
A$.}
\end{array}
\]
The intuition is that $\phi \oplus \psi$ acts as a
\emph{case-analysis eliminator}: the policy commits to a consequence
$A$ when that consequence is forced by each alternative, in every
possible extension. Crucially, $\supp_\base{P}\, \phi \oplus \psi$
does \emph{not} assert that one of $\phi$ or $\psi$ is currently
derivable --- neither may be. What it asserts is an inferential
commitment: the policy is positioned so that when either alternative is
resolved, the same consequences flow.

\emph{Why not classical or intuitionistic disjunction?}
Classical disjunction $\phi \vee_c \psi$ requires that at least
one disjunct is true in a completed model. But the assignment is
pending and no such model exists. \emph{Intuitionistic disjunction}
$\phi \vee_i \psi$ requires a proof that one specific disjunct holds
--- equally inapplicable. Both ask us to commit to a particular
alternative, which is precisely what we cannot yet do. Robust
disjunction asks only that the inferential content that would follow
from each alternative is already guaranteed by the policy's structure.

\begin{example}[Separation of duty under pending appointment]
\label{ex:disjunction}
Suppose $\base{P}$ contains rules specifying that no one may review a
paper they co-authored, and that the chair makes review assignments.
Alice, Bob, and Carol have each submitted papers. To verify the
separation-of-duty constraint robustly, we check that in any extension
$\base{Q} \baseext \base{P}$:
\begin{align*}
  \mathsf{chair}(\mathsf{alice})
  &\supp_\base{Q} \neg\mathsf{canReview}(\mathsf{alice}, p_{007}), \\
  \mathsf{chair}(\mathsf{bob})
  &\supp_\base{Q} \neg\mathsf{canReview}(\mathsf{bob},   p_{007}), \\
  \mathsf{chair}(\mathsf{carol})
  &\supp_\base{Q} \neg\mathsf{canReview}(\mathsf{carol}, p_{007}).
\end{align*}
If all three hold in every such extension, then by the semantics of
$\oplus$ we obtain $\supp_\base{P}\, \mathsf{chair}(\mathsf{alice})
\oplus \mathsf{chair}(\mathsf{bob}) \oplus \mathsf{chair}(\mathsf{carol})$
and consequently $\supp_\base{P}\,
\neg\mathsf{canReview}(\mathsf{chair}, p_{007})$. The constraint is
verified \emph{once}, against the policy's rules, without constructing
or examining any completed assignment.

Equally important is what happens when the pool of candidates later
expands to include Dave. By Proposition~\ref{prop:monotonicity}, the
already-verified judgments for Alice, Bob, and Carol remain valid. It
suffices to check the single additional judgment
$\mathsf{chair}(\mathsf{dave}) \supp_\base{Q}
\neg\mathsf{canReview}(\mathsf{dave}, p_{007})$. The previously
established guarantee and the work behind it need not be revisited.
\end{example}

\subsection{Conjunction: Compound Security Requirements}
\label{sec:conjunction}

Whereas disjunction models uncertainty over which alternative will
obtain, conjunction models the requirement that multiple security
properties hold simultaneously --- and that the policy is committed to
their \emph{joint} inferential consequences, not merely to each in
isolation.

Most meaningful access control properties are compound. The conference
policy must simultaneously enforce conflict-of-interest avoidance,
review confidentiality, and delegation safety. Verifying each property
in isolation can miss failures that only manifest when they are
considered together: an override rule may be consistent with each
individual requirement yet break their combination. We call this \emph{robust conjunction:}
\[
\begin{array}{lcl}
\supp_\base{P}\, \phi \otimes \psi & \mbox{iff} & \text{for any
$\base{Q} \baseext \base{P}$ and any atomic $A$,}  \\ & & \text{if $\phi
, \psi \supp_\base{Q} A$, then $\vdash_\base{Q}
A$.}
\end{array}
\]
This is \emph{prima facie} stronger than verifying $\phi$ and $\psi$ independently,
because the left-hand side places both as simultaneous hypotheses: any
consequence derivable from their joint presence must be independently
derivable in $\base{Q}$. It is possible for each of $\phi$ and $\psi$
to be supported while $\phi \otimes \psi$ fails, when some extension
supports each individually but fails to absorb their combined
inferential load. This is precisely the failure mode that matters for
compound access control constraints.

\begin{example}[Compound delegation constraint]
\label{ex:conjunction}
Policy rule 7 permits PC members to delegate papers to sub-reviewers,
subject to two conditions: the sub-reviewer must not be an author of
the paper, and must not already be assigned to it. The security
guarantee that matters is their conjunction:
\[
  \supp_\base{P}\;
  \neg\mathsf{author}(\mathsf{Eve}, p_{007})
  \;\otimes\;
  \neg\mathsf{alreadyAssigned}(\mathsf{Eve}, p_{007}).
\]
Suppose a second rule permits delegation in the presence of a chair
override, ignoring both conditions. That override rule would make each
condition supportable via an independent derivation path, yet it breaks
the conjunction: the override does not absorb both hypotheses jointly,
so the combined constraint fails. Independent verification of each
condition would miss this; robust conjunction detects it.
\end{example}

\subsection{Negation: Policy Corruption}
\label{sec:negation}

To express constraints of the form ``this situation must never arise''
--- the most fundamental kind of access control prohibition --- we
require negation. The naive reading of negation in a verification
context is \emph{unprovability}: $\neg\phi$ holds because $\phi$ is
not currently derivable. But this reading is incompatible with
robustness. Unprovability is non-monotone: adding rules to the policy
can make a previously unprovable formula provable, silently destroying
the safety guarantee. A negation defined as mere absence of
derivability would not be a robust property at all --- it could be
invalidated by any subsequent policy extension, which is precisely the
kind of fragility we are trying to eliminate.

What robustness requires is a different kind of negation entirely.
Rather than saying ``$\phi$ does not hold right now,'' we want to say
``$\phi$ is \emph{structurally incompatible} with the policy's
inferential commitments.'' The policy should be organised in such a way
that $\phi$ could never arise in any coherent extension --- not because
nothing has established it yet, but because the rules themselves
foreclose it. This is the identifying feature we seek: not the current
absence of a fact, but the policy's active, extension-stable commitment
to excluding it.

We make this precise by introducing a constant $0$ representing
\emph{policy corruption}: the state in which every atomic formula is
derivable, indicating complete breakdown of the access control
mechanism. Formally, 
\[
\supp_\base{P} 0 \qquad \mbox{iff}\qquad \supp_\base{P} A \text{~for every atomic $A$}
\]
A corrupt policy has ceased to discriminate
between permitted and forbidden actions; no security property can be
enforced within it. The constant $0$ serves as the unit of robust
disjunction --- $\phi \oplus 0$ is equivalent to $\phi$ --- and we
assume throughout that the policies under consideration are
\emph{consistent} in the sense that they do not support $0$.

Negation is then defined as the implication into corruption:
$\supp_\base{P} \neg\phi$ (i.e., $\supp_\base{P} \phi \Rightarrow 0$)
holds iff $\phi \supp_\base{P} 0$. Expanding via the conditional
support judgment, this means: for any $\base{Q} \baseext \base{P}$, if
$\supp_\base{Q} \phi$, then $\supp_\base{Q} 0$. The policy-theoretic
reading is that any extension which comes to support $\phi$ thereby
corrupts the policy entirely. To permit $\phi$ would render the access
control system incoherent. Negation is therefore not unprovability but
an \emph{active incompatibility guarantee}: the policy is structured so
that $\phi$ cannot arise without destroying all discriminating power.
This is stable under extension by the same mechanism as implication
--- the universal quantification over extensions in the definition of
$\Rightarrow$ ensures that the incompatibility persists however the
policy is later augmented.

\begin{example}[Conflict of interest as incompatibility]
\label{ex:negation}
The conflict-of-interest constraint says that an author may not review
their own paper:
\[
  \supp_\base{P}\;
  \mathit{author}(a, p)
  \Rightarrow
  \neg\,\mathit{permit}(a, \mathsf{review}, p).
\]
This holds of $\base{P}$ iff for any $\base{Q} \baseext \base{P}$, if
$\base{Q}$ derives $\mathit{author}(a,p)$, then any further extension
that comes to support $\mathit{permit}(a, \mathsf{review}, p)$ is
corrupt. The policy's rules are structured so that combining authorship
and review permission is incompatible with a coherent access control
system. Notice that this says nothing about whether either formula is
currently derivable: the constraint is a structural property of the
rules themselves, not a snapshot of the current policy state. This is
strictly stronger than ``the permit is not currently derivable'': it
says the policy cannot be legally extended to produce the permit
without collapsing all access distinctions.
\end{example}

\subsection{Summary: The Support Judgment}
\label{sec:support-summary}

The full set of clauses is collected in Fig.~\ref{fig:support}. This semantics framework is an example of  \emph{base-extension
semantics}~\cite{sandqvist-intuitionistic}, a form of proof-theoretic
semantics in which the meaning of a formula is determined not by its
truth in a model but by the inferential behaviour of the formula across
extensions of a `base'. In our setting the bases are logic programs
representing access control policies; the extensions are policy
additions.

We
briefly note three key properties of the resulting framework.

\begin{figure}[t]
\hrule
\medskip
\begin{align*}
  &\supp_\base{P} A
  && \text{iff} \quad \vdash_{\base{P}} A
 \\[0.5ex]
  &\supp_\base{P} \phi \Rightarrow \psi
  && \text{iff} \quad \phi \supp_\base{P} \psi \\[0.5em]
  &\supp_\base{P} \phi \otimes \psi
  && \text{iff} \quad
     \text{for any $\base{Q} \baseext \base{P}$ and atomic $A$,} \\
  &&& \text{if $\phi, \psi \supp_{\base{Q}} A$,
      then $\vdash_{\base{Q}} A$} \\[0.5em]
  &\supp_\base{P} \phi \oplus \psi
  && \text{iff} \quad
     \text{for any $\base{Q} \baseext \base{P}$ and atomic $A$,} \\
  &&& \text{if $\phi \supp_{\base{Q}} A$ and
      $\psi \supp_{\base{Q}} A$, then $\vdash_{\base{Q}} A$} \\[0.5em]
  &\supp_\base{P} 0
  && \text{iff} \quad \text{for all atomic $A$, $\supp_\base{P} A$} \\[0.5ex]
  \Gamma\; &\supp_{\base{P}} \phi
  && \text{iff} \quad
     \text{for any $\base{Q} \baseext \base{P}$, if $\supp_{\base{Q}} \psi$} \\
  &&& \text{for all $\psi \in \Gamma$,
      then $\supp_{\base{Q}} \phi$}
\end{align*}
\medskip
\hrule
\caption{Support judgment clauses for robust properties.}
\label{fig:support}
\end{figure}

\paragraph{Inferentiality.}
Every clause is grounded in the derivability relation $\vdash_\base{P}$.
Compound properties introduce no new primitive notions: they are defined
entirely in terms of the policy's derivation structure across
extensions. There are no models, no valuations, no external oracle.

\paragraph{Modal character.}
The quantification over all extensions $\base{Q} \baseext \base{P}$
gives the semantics its modal, future-oriented character. Support
$\supp_\base{P} \phi$ does not report what is currently derivable; it
says the policy is \emph{committed} to $\phi$ across all future
completions. This is precisely what existing model-theoretic and
logic-programming-based tools cannot express: they check whether a
property holds in a fully-specified current policy, but not whether it
is structurally guaranteed in every possible extension.

\paragraph{Expressiveness.}
The four connectives $\Rightarrow$, $\oplus$, $\otimes$, and $\neg$
cover the principal categories of access control constraints:
conditional authorisation (implication), resolution-independent safety
under pending assignments (disjunction), compound requirements whose
conjunction matters (conjunction), and prohibition (negation). The
language is sufficient for the verification results developed in
Section~\ref{sec:operational}.

\paragraph{Monotonicity.}
The most consequential structural property of the support judgment is
its stability under policy extension.

\begin{proposition}[Monotonicity]
\label{prop:monotonicity}
  If $\supp_\base{P} \phi$ and $\base{P} \baseext \base{Q}$, then
  $\supp_\base{Q} \phi$.
\end{proposition}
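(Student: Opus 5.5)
The argument is a structural induction on $\phi$, following the clauses of Fig.~\ref{fig:support}. The statement writes $\base{P} \baseext \base{Q}$; the intended and useful reading is that $\base{Q}$ extends $\base{P}$, i.e.\ $\base{Q} \baseext \base{P}$. I argue under that reading. The only fact I need beyond the clauses themselves is that extension is a preorder, and in particular transitive: if $\base{R} \baseext \base{Q}$ and $\base{Q} \baseext \base{P}$, then $\base{R} \baseext \base{P}$. This is immediate from reading extension as inclusion of rule sets.

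For the \textbf{base case}, $\phi$ is an atom $A$. Here $\supp_\base{P} A$ means $\vdash_\base{P} A$. A derivation in $\base{P}$ uses only rules of $\base{P}$, and all of these belong to $\base{Q}$, so the same derivation shows $\vdash_\base{Q} A$. The constant $0$ then follows at once: if every atom is derivable in $\base{P}$, every atom is derivable in $\base{Q}$.

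The \textbf{compound cases} $\Rightarrow$, $\otimes$ and $\oplus$ need no induction hypothesis at all, because each clause opens with a universal quantification over extensions. Take $\phi \otimes \psi$. Suppose $\base{R} \baseext \base{Q}$, $A$ is atomic, and $\phi,\psi \supp_\base{R} A$. By transitivity $\base{R} \baseext \base{P}$, so the hypothesis $\supp_\base{P} \phi \otimes \psi$ applies to $\base{R}$ and gives $\vdash_\base{R} A$. The $\oplus$ case is identical. The $\Rightarrow$ case reduces to the conditional judgment: $\phi \supp_\base{P} \psi$ quantifies over all extensions of $\base{P}$, and every extension of $\base{Q}$ is one of them. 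Negation is $\phi \Rightarrow 0$, so it is covered by the $\Rightarrow$ case. The same argument gives a lemma worth recording: $\Gamma \supp_\base{P} \phi$ implies $\Gamma \supp_\base{Q} \phi$.

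I expect \textbf{no genuine obstacle}; the proof is essentially bookkeeping. Two points need care. First, the statement's notation $\base{P} \baseext \base{Q}$ must be resolved. Second, the atomic case depends on derivability being monotone in the rule set. This holds for the Horn-clause programs considered here, but it would fail if policies could contain rules using negation-as-failure, so I will state that assumption explicitly.
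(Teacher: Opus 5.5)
Your proof is correct and follows the paper's argument: structural induction, with the atomic case from monotonicity of derivability, and the $\Rightarrow$, $\otimes$, $\oplus$ cases from the fact that every extension of $\base{Q}$ is an extension of $\base{P}$. You also read the hypothesis as $\base{Q} \baseext \base{P}$, as the paper's proof does, and handling $0$ through the atomic case is slightly more careful than the paper, which groups $0$ with the compound clauses even though the $(0)$ clause does not quantify over extensions.
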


\begin{proof}
By induction on the structure of $\phi$. For atomic $\phi$, the result
follows immediately from the monotonicity of derivability: any formula
derivable in $\base{P}$ is derivable in any $\base{Q} \baseext
\base{P}$. For compound $\phi$, each clause in
Fig.~\ref{fig:support} quantifies universally over all extensions of
$\base{P}$; since every extension of $\base{Q}$ is also an extension
of $\base{P}$, the universal condition remains satisfied in $\base{Q}$.
\end{proof}

Monotonicity is the foundation of \emph{compositional verification}.
A robust property verified against $\base{P}$ need never be
re-verified after $\base{P}$ is extended: only properties directly
affected by the new rules require fresh attention. In a large, evolving
policy the verification cost is therefore proportional to the policy
change rather than to its total size. Combined with the treatment of
pending decisions via robust disjunction
(Example~\ref{ex:disjunction}), a single verification pass against the
policy's current rules can underwrite a guarantee that extends across
the entire future lifecycle of the policy.

Finally, although the clauses for conjunction and disjunction quantify
over all extensions of $\base{P}$ --- an infinite set --- this
second-order character is not a barrier to computation. In
Section~\ref{sec:operational} we show that the quantification
collapses to ordinary proof-search in a logic programming language,
yielding a finitary and decidable verification procedure. Monotonicity
therefore carries no computational cost.
% ============================================================================
% ============================================================================
\section{Computing Support Judgments}
\label{sec:operational}
% ============================================================================

The support judgment defined in Section~\ref{sec:robust} is semantic: its clauses
quantify universally over all policy extensions, which form an infinite family.
A priori, it is unclear whether---or how---such a judgment can be computed in
practice. This section resolves that question. We show that the quantification
over extensions collapses to ordinary proof-search in a logic programming
language, and that each connective of the support judgment corresponds to a
well-understood operational rule. Robust property verification is therefore
computable: it reduces to executing a goal in a finitely specified program.

\subsection{A Logic Programming Policy Language}
\label{sec:lp}

We work with the following fragment of first-order logic, extended with
atomic second-order quantifiers. A nested clausal language:
\[
  H \;::=\; A \;\mid\; H \supset H \;\mid\; \forall X.\, H
\]
where $A$ ranges over atomic formulas and $X$ ranges over atomic predicate
variables, instantiated only to atoms. A \emph{program} $\base{P}$ is a finite
set of such clauses. A \emph{query} is a sequent $\base{P} \longrightarrow H$
with $H$ a clause.

This language is a small extension of standard Horn clause logic programming. There is no explicit
conjunction because it can be derived. A standard Horn clause $H \leftarrow B_1,...B_n$ is be expressed as $B_1 \to (B_2 \to \ldots ) \to H$. Despite this austerity,
the language is sufficient to encode all four connectives of
Section~\ref{sec:robust}.

The operational semantics is given by \emph{uniform proof-search} after Millet et al.~\cite{Miller1991}: a proof
of $\base{P} \longrightarrow G$ is \emph{uniform} if every non-atomic goal is
the conclusion of the right-introduction rule for its outermost connective,
with the program consulted only at atomic goals. We write $\base{P}
\vdash_O G$ for the resulting operational provability relation. It is defined as follows where $D$ and $G$ are clauses and $A$ is an atom:

\begin{itemize}
  \item \textsc{Fact}: $\base{P} \vdash_O A$ if $A \in \base{P}$.
  \item \textsc{Backchain}: $\base{P} \vdash_O A$ if there exists
    $G \supset A \in \base{P}$ such that $\base{P} \vdash_O G$.
  \item \textsc{Augment}: $\base{P} \vdash_O D \supset G$ iff
    $\base{P} \cup \{D\} \vdash_O G$.
  \item \textsc{Generic}: $\base{P} \vdash_O \forall X.\, G$ if
    $\base{P} \vdash_O G[\alpha/X]$ for a fresh atom $\alpha$.
  \item \textsc{Instance}: $\base{P} \vdash_O A$ if there exists
    $\forall X.\, D \in \base{P}$ and a fresh atom $\beta \neq A$ such that
    $\base{P} \cup \{D[\beta/X]\} \vdash_O A$.
\end{itemize}

The central result of Miller et al.~\cite{Miller1991} adapts to
this fragment: $\base{P} \vdash G$ (intuitionistic derivability) if and only
if there exists a uniform proof of $\base{P} \longrightarrow G$. We record two
elementary lemmas used throughout.

\begin{lemma}[Cut]
  \label{lem:cut}
  If $\base{P} \vdash_O D$ and $\base{Q} \cup \{D\} \vdash_O G$, then
  $\base{Q} \cup \base{P} \vdash_O G$.
\end{lemma}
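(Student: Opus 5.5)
We prove the lemma by induction on the height of the uniform derivation of $\base{Q} \cup \{D\} \vdash_O G$, quantifying over all $\base{Q}$ and $G$ (but with $\base{P}$ and $D$ held fixed). We first record the structural weakening property: if $\base{R} \vdash_O G$ and $\base{R} \subseteq \base{R}'$, then $\base{R}' \vdash_O G$. This follows by a routine induction on derivations, since every rule only asks for membership of clauses in the program or for derivations from supersets of it, and fresh atoms can be renamed away from $\base{R}'$. With weakening available we proceed case by case on the last rule.

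\textsc{Augment}: $G = D' \supset G'$ and $\base{Q} \cup \{D\} \cup \{D'\} \vdash_O G'$. Apply the induction hypothesis with $\base{Q} \cup \{D'\}$ in place of $\base{Q}$ to get $\base{Q} \cup \{D'\} \cup \base{P} \vdash_O G'$, then reapply \textsc{Augment}. \textsc{Generic}: $G = \forall X.\,G'$, so the premise is $\base{Q}\cup\{D\} \vdash_O G'[\alpha/X]$ with $\alpha$ fresh. We choose $\alpha$ fresh also for $\base{P}$ (renaming if necessary), apply the induction hypothesis and reapply \textsc{Generic}. \textsc{Fact}, \textsc{Backchain} and \textsc{Instance} using a clause of $\base{Q}$: the clause is still available in $\base{Q}\cup\base{P}$, so we apply the induction hypothesis to the premise, with $\base{Q}$ enlarged by the instance in the \textsc{Instance} case, and reapply the same rule.

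The real work is when the atomic goal $A$ is closed by using $D$ itself. Here we need an auxiliary claim about $\base{P} \vdash_O D$: if $\base{P} \vdash_O D$ and $D$ is used at the root of a derivation of $A$ from $\base{R}\cup\{D\}$, then $A$ is derivable from $\base{R}\cup\base{P}$. We prove this by a secondary induction on the structure of $D$.
\begin{itemize}
\item If $D = A$, then $\base{P} \vdash_O A$, and weakening finishes.
\item If $D = G_1 \supset A$ was used by \textsc{Backchain}, the premise $\base{Q}\cup\{D\}\vdash_O G_1$ yields $\base{Q}\cup\base{P} \vdash_O G_1$ by the outer induction hypothesis. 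Meanwhile $\base{P}\vdash_O G_1 \supset A$ gives $\base{P}\cup\{G_1\} \vdash_O A$ by \textsc{Augment}. We now cut on $G_1$, which is a structurally smaller formula, to obtain $\base{Q}\cup\base{P} \vdash_O A$.
\item If $D = \forall X.\,D'$ was used by \textsc{Instance}, then from $\base{P} \vdash_O \forall X.\,D'$ we get $\base{P}\vdash_O D'[\alpha/X]$ via \textsc{Generic}. Substituting $\beta$ for the fresh $\alpha$ gives $\base{P}\vdash_O D'[\beta/X]$, which relies on a substitution lemma proved by a routine induction. We then cut on $D'[\beta/X]$, which is again smaller in connective count, since instantiation is only by atoms.
\end{itemize}

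This exposes the expected main obstacle: the proof needs a lexicographic induction, outer on the size of the cut formula $D$ and inner on the height of the derivation of $\base{Q}\cup\{D\}\vdash_O G$. It also needs the freshness side conditions ($\beta \neq A$, fresh $\alpha$) to be preserved under renaming, and it needs the unstated implicational case where $D$ is nested. We handle the latter by observing that \textsc{Backchain} is applied to $G_1 \supset A$ with $A$ atomic. When the head is itself an implication, we read \textsc{Backchain} through its iterated form, in which the body goals are discharged one at a time, each by the smaller cut. The generic/instance interplay needs atomic-substitution invariance of $\vdash_O$, and we will prove this first as a separate step.
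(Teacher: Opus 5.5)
The paper gives no proof of this lemma. It is recorded as ``elementary'' right after the appeal to Miller et al. The implicit justification is that once $\vdash_O$ coincides with intuitionistic derivability, Cut is just transitivity of intuitionistic consequence (plug the proof of $D$ into the hypothesis $D$), carried back through completeness of uniform proofs. You instead give a direct, self-contained cut-admissibility argument. It is a lexicographic induction on the connective count of $D$ and the height of the derivation of $\base{Q}\cup\{D\}\vdash_O G$, supported by weakening, renaming/atomic substitution, and inversion of \textsc{Augment} and \textsc{Generic}. Atomic instantiation is exactly what makes $D'[\beta/X]$ a smaller cut formula.

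Your route does not depend on the correspondence theorem, which fails for the rules as printed. Because \textsc{Instance} forbids $\beta = A$, one cannot derive $\{\forall X.\,X\}\vdash_O \forall X.\,X$, and \textsc{Backchain} ignores clauses with non-atomic heads. Your induction works for both the printed and the repaired rules. The paper's route is a one-liner, but only once a normalisation result has been re-established for a corrected rule set.

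Three things to tighten:
\begin{itemize}
\item The hypothesis for smaller cut formulas must be quantified over all $\base{P},\base{Q},G$, because the cut on $G_1$ swaps roles: the provider becomes $\base{Q}\cup\base{P}$ and the context becomes $\base{P}$. Also, the hypothesis you call ``outer'' in the \textsc{Backchain} bullet is really the height hypothesis.
\item In the principal \textsc{Instance} case, first apply the height hypothesis to the premise $\base{Q}\cup\{D\}\cup\{D'[\beta/X]\}\vdash_O A$ to remove $D$, and only then cut on $D'[\beta/X]$. The other order leaves a copy of $D$ in a derivation of uncontrolled height.
\item A generalised \textsc{Backchain} should also allow quantifiers in heads, since clauses such as $\enc{\psi}\supset\forall X.\,X$ enter programs via \textsc{Augment} in the implication case of Theorem~\ref{thm:soundness}. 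The principal case then uses \textsc{Generic} and \textsc{Augment} inversion plus substitution along the head, and the cuts are still on formulas of smaller connective count.
\end{itemize}
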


\begin{lemma}[Monotonicity]
  \label{lem:mono-lp}
  If $\base{P} \vdash_O G$ and $\base{P} \baseext \base{Q}$, then
  $\base{Q} \vdash_O G$.
\end{lemma}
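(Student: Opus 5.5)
The plan is to prove Lemma~\ref{lem:mono-lp} by induction on the height of the derivation of $\base{P} \vdash_O G$, with $\base{Q}$ universally quantified in the induction hypothesis. The \textsc{Augment} and \textsc{Instance} cases enlarge the program, so the hypothesis must be available for supersets of $\base{P}$ other than $\base{Q}$. Concretely, the statement I prove is: for every derivation of $\base{P} \vdash_O G$ and every $\base{Q} \baseext \base{P}$, we have $\base{Q} \vdash_O G$. I then take cases on the last rule applied.

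\textbf{Case \textsc{Fact}.} Here $A \in \base{P} \subseteq \base{Q}$, so $\base{Q} \vdash_O A$ follows directly by \textsc{Fact}.

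\textbf{Case \textsc{Backchain}.} The clause $G \supset A$ lies in $\base{P}$, hence in $\base{Q}$. The induction hypothesis applied to the shorter derivation of $\base{P} \vdash_O G$ gives $\base{Q} \vdash_O G$, and \textsc{Backchain} in $\base{Q}$ then yields $\base{Q} \vdash_O A$.

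\textbf{Case \textsc{Augment}.} The premise is $\base{P} \cup \{D\} \vdash_O G$. Since $\base{Q} \cup \{D\} \baseext \base{P} \cup \{D\}$, the induction hypothesis gives $\base{Q} \cup \{D\} \vdash_O G$, and \textsc{Augment} gives $\base{Q} \vdash_O D \supset G$.

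\textbf{Case \textsc{Instance}.} This case has the same shape as \textsc{Augment}. The quantified clause $\forall X.\,D$ lies in $\base{Q}$, and the induction hypothesis applied to $\base{Q} \cup \{D[\beta/X]\}$ gives the required premise.

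\textbf{Case \textsc{Generic}, the main obstacle.} The atoms $\alpha$ (in \textsc{Generic}) and $\beta$ (in \textsc{Instance}) are required to be fresh. An atom fresh for $\base{P}$ need not be fresh for the larger program $\base{Q}$. To handle this, I first establish a renaming lemma: if $\base{P} \vdash_O G$ and $\alpha$ is an atom not occurring in $\base{P}$ or $G$, then replacing $\alpha$ uniformly by any other atom $\alpha'$ fresh for the same data preserves derivability, with the derivation keeping its height. This lemma is proved by a routine induction, since every rule is invariant under a bijective renaming of atoms. In the \textsc{Generic} case I then rename $\alpha$ to an atom $\alpha'$ fresh for $\base{Q}$ as well. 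Because $\alpha$ and $\alpha'$ are both absent from $\base{P}$, the renaming turns $\base{P} \vdash_O G[\alpha/X]$ into $\base{P} \vdash_O G[\alpha'/X]$. The induction hypothesis then gives $\base{Q} \vdash_O G[\alpha'/X]$, and \textsc{Generic} applies in $\base{Q}$. The side condition $\beta \neq A$ in \textsc{Instance} is handled by the same renaming argument.

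Because the renaming lemma preserves derivation height, the induction hypothesis still applies after renaming, and the induction goes through.
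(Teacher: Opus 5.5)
The paper states this lemma without proof (it is one of two ``elementary lemmas''), so there is no argument of its own to compare with. Your proof is the standard one and it goes through: induction on derivation height with the extension generalised, plus a height-preserving renaming to restore freshness of the eigenvariables in \textsc{Generic} and \textsc{Instance}.

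A few points need tightening.

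First, the hypothesis as printed is $\base{P} \baseext \base{Q}$, i.e.\ $\base{Q} \subseteq \base{P}$. In that direction the claim is false: take $\base{P} = \{A\}$, $\base{Q} = \emptyset$, $G = A$. You have proved the intended version with $\base{Q} \baseext \base{P}$, which is the one used in the proof of Theorem~\ref{thm:soundness}. Say explicitly that this is the reading you adopt.

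Second, your renaming lemma has the wrong hypothesis. If $\alpha$ occurs in neither $\base{P}$ nor $G$, renaming it changes nothing in the sequent. Yet you apply it to the goal $G[\alpha/X]$, in which $\alpha$ does occur. What you need is equivariance. For a transposition $\sigma = (\alpha\;\alpha')$ of atoms, a derivation of $\base{P} \vdash_O G$ of height $h$ yields a derivation of $\sigma\base{P} \vdash_O \sigma G$ of height $h$. It must be a swap rather than a one-way substitution $\alpha \mapsto \alpha'$, because $\alpha'$ may already serve as an eigenvariable higher up in the derivation. With $\alpha, \alpha'$ both absent from $\base{P}$ and from $\forall X.\,G$, this gives exactly $\base{P} \vdash_O G[\alpha'/X]$. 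In the \textsc{Instance} case, the same swap on $\beta$ is what repairs freshness of $\beta$ with respect to $\base{Q}$. The condition $\beta \neq A$ needs no repair, since $A$ is unchanged.

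Third, choosing an $\alpha'$ fresh for $\base{Q}$ relies on programs being finite and the supply of atoms being infinite; state this. For an extension mentioning every atom, \textsc{Generic} could not fire and the lemma would fail, e.g.\ for $\base{P} = \emptyset$ and $G = \forall X.\,(X \supset X)$.
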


We identify policies with their logic programs and write $\vdash_{\base{P}}
A$ interchangeably with $\base{P} \vdash_O A$ for atomic $A$.

\begin{example}[Conference Policy as a Logic Program]
\label{ex:lp}
The following program $\base{C}$ captures a fragment of the conference
management policy:
\begin{align*}
  &\mathsf{pcMember}(\mathsf{alice}). \\
  &\mathsf{author}(\mathsf{alice}, p_{007}). \\
  &
    \mathsf{pcMember}(m) \supset \mathsf{author}(m, p)
    \supset \mathsf{conflicted}(m, p). \\
  &
    \mathsf{conflicted}(m, p) \supset \mathsf{canReview}(m, p)
    \supset \mathsf{violation}(m, p).
\end{align*}
The first two clauses are ground facts. The third derives
$\mathsf{conflicted}(m,p)$ for any PC member who authored paper $p$.
The fourth flags a policy violation whenever a conflicted member is
assigned to review the same paper. From $\base{P}$ alone we can derive
$\mathsf{conflicted}(\mathsf{alice}, p_{007})$, but not yet
$\mathsf{violation}(\mathsf{alice}, p_{007})$: no rule has
introduced $\mathsf{canReview}(\mathsf{alice}, p_{007})$ into the
program, so the review assignment is a pending decision.
\end{example}

\subsection{Encoding Robust Properties as Goals}
\label{sec:encoding}

The support clauses in Fig.~\ref{fig:support} involve universal
quantification over policy extensions and over atomic formulas. Both kinds of
quantification are realised in our language by eigenvariables: an atom
introduced by Generic is fresh, represents any possible instantiation, and
presupposes no completed domain. The continuation-passing encoding below
makes this concrete.

\begin{definition}[Goal Encoding]
  \label{def:encoding}
  Define $\enc{\cdot}$ mapping robust property formulas to goals by:
  \begin{align*}
    \enc{A}                      &= A \\
    \enc{0}                      &= \forall X.\, X \\
    \enc{\phi \Rightarrow \psi}  &= \enc{\phi} \supset \enc{\psi} \\
    \enc{\phi \otimes \psi}      &= \forall X.\,
      \bigl((\enc{\phi} \supset \enc{\psi} \supset X) \supset X\bigr) \\
    \enc{\phi \oplus \psi}       &= \forall X.\,
      \bigl((\enc{\phi} \supset X) \supset (\enc{\psi} \supset X) \supset X\bigr) 
  \end{align*}
  where $A$ ranges over atomic formulas and $X$ ranges over atomic predicate variables.
\end{definition}

Each encoded formula is a clause in the sense of our grammar: it is built
from atoms by $\supset$ and $\forall X$ alone, with no conjunction or
disjunction. The universal quantifier over atoms in the clauses for
$\otimes$, $\oplus$, and $0$ is realised operationally by Generic: the
fresh atom $\alpha$ introduced at that step is an eigenvariable standing
in for any atom, not a member of a pre-given domain.

The encoding of $0$ as $\forall X.\, X$ captures policy corruption exactly:
to derive it against any program is to show that every atomic formula is derivable.
The encoding of robust disjunction $\phi \oplus \psi$ internalises
disjunction elimination: it says that any atom $\alpha$ which would follow from
each disjunct independently can be derived directly, which is the operational
counterpart of the $(\oplus)$ clause in Fig.~\ref{fig:support}.
The encoding of $\phi \otimes \psi$ internalises the joint-hypothesis
character of robust conjunction: $\alpha$ is derivable only if it follows from
both $\phi$ and $\psi$ taken together as simultaneous premises.

\begin{example}[Encoding the separation-of-duty constraint]
\label{ex:encoding}
Recall from Example~\ref{ex:disjunction} the pending appointment:
one of Alice, Bob, or Carol will be chair. The separation-of-duty
requirement is
\[
  \mathsf{chair}(\mathsf{alice})
  \oplus \mathsf{chair}(\mathsf{bob})
  \oplus \mathsf{chair}(\mathsf{carol}).
\]
While its encoding passes through two sets of $\oplus$ encodings, it is morally and technically equivalent to the following:
\[
\begin{array}{c}
  \forall X.\,
  \bigl(\;(\mathsf{chair}(\mathsf{alice}) \supset X)
    \supset (\mathsf{chair}(\mathsf{bob}) \supset X)\\  \qquad
    \supset (\mathsf{chair}(\mathsf{carol}) \supset X)
    \supset X\; \bigr).
    \end{array}
\]
To verify this against the policy program $\base{P}$, the Generic rule
introduces a fresh atom $\alpha$, and the Augment rule adds each of the
three clauses $\mathsf{chair}(\mathsf{alice}) \supset \alpha$,
$\mathsf{chair}(\mathsf{bob}) \supset \alpha$,
$\mathsf{chair}(\mathsf{carol}) \supset \alpha$ as temporary hypotheses.
The remaining goal is $\alpha$, which must be derivable from $\base{P}$
augmented with these three clauses. Operationally, this is precisely the
check that the conflict-of-interest rules of $\base{P}$ together force
$\alpha$ regardless of which appointment is made---without committing to
any particular candidate.
\end{example}

\subsection{Soundness and Completeness}
\label{sec:soundness-completeness}

\begin{theorem}[Soundness and Completeness]
  \label{thm:soundness}
  \label{thm:completeness}
  For any policy $\base{P}$ and robust property formula $\phi$,
  \[
    \supp_\base{P} \phi
    \quad\text{if and only if}\quad
    \base{P} \vdash_O \enc{\phi}.
  \]
\end{theorem}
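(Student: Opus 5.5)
The proof goes by induction on the structure of $\phi$, with a strengthened hypothesis that handles contexts. I will prove simultaneously that, for every program $\base{P}$ (allowing it to contain arbitrary clauses, in particular encoded formulas), $\supp_\base{P} \phi$ iff $\base{P} \vdash_O \enc{\phi}$.

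The pivotal auxiliary fact is a \emph{representation lemma}: $\supp_\base{Q} \phi$ iff $\base{Q} \vdash_O \enc{\phi}$ iff $\supp_{\base{Q}} \phi$ holds in every extension. By monotonicity (Proposition~\ref{prop:monotonicity} and Lemma~\ref{lem:mono-lp}) the second form follows from the first. From this I will derive a characterisation of hypothetical support:
\[
\phi \supp_\base{P} \psi \quad\text{iff}\quad \base{P}\cup\{\enc{\phi}\} \vdash_O \enc{\psi}.
\]
\begin{itemize}
\item Right to left: given $\base{Q} \baseext \base{P}$ with $\supp_\base{Q}\phi$, the induction hypothesis gives $\base{Q}\vdash_O\enc{\phi}$, and Lemma~\ref{lem:cut} with Lemma~\ref{lem:mono-lp} gives $\base{Q}\vdash_O\enc{\psi}$, hence $\supp_\base{Q}\psi$.
\item Left to right: take the particular extension $\base{Q} = \base{P}\cup\{\enc{\phi}\}$. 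Here $\base{Q}\vdash_O\enc{\phi}$ holds via the identity property of uniform proofs, which must be checked for $\supp$-clauses (immediate by Augment/Generic/Instance decomposition). By the induction hypothesis $\supp_\base{Q}\phi$, so $\supp_\base{Q}\psi$, and the induction hypothesis again gives the goal.
\end{itemize}
This is where the induction must range over arbitrary programs, not only Horn policies.

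With this in hand, the cases go as follows.
\begin{itemize}
\item \emph{Atoms:} immediate by definition.
\item \emph{Implication:} the characterisation above, combined with \textsc{Augment}.
\item \emph{$0$:} $\base{P}\vdash_O\forall X.X$ iff $\base{P}\vdash_O\alpha$ for fresh $\alpha$. The forward direction substitutes an arbitrary atom $A$ for the eigenvariable; this substitution lemma for uniform proofs will be stated explicitly. The backward direction uses $\supp_\base{P}\alpha$, treating $\alpha$ as an atom of $\mathbb{A}$.
\item \emph{$\otimes$ and $\oplus$:} the same pattern. For $\oplus$, $\base{P}\vdash_O\enc{\phi\oplus\psi}$ iff, for fresh $\alpha$, $\base{P}\cup\{\enc{\phi}\supset\alpha,\enc{\psi}\supset\alpha\}\vdash_O\alpha$.
  \begin{itemize}
  \item Soundness: given $\base{Q}\baseext\base{P}$ and $A$ with $\phi\supp_\base{Q}A$ and $\psi\supp_\base{Q}A$, the characterisation yields $\base{Q}\vdash_O\enc{\phi}\supset A$ and $\base{Q}\vdash_O\enc{\psi}\supset A$. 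Substitute $A$ for $\alpha$, apply Lemma~\ref{lem:cut} twice, and use Lemma~\ref{lem:mono-lp}.
  \item Completeness: choose the extension $\base{Q}=\base{P}\cup\{\enc{\phi}\supset\alpha,\enc{\psi}\supset\alpha\}$ and atom $\alpha$. Then $\phi\supp_\base{Q}\alpha$ holds by the characterisation, using $\base{Q}\cup\{\enc\phi\}\vdash_O\alpha$ by \textsc{Backchain}, and similarly for $\psi$. Hence $\vdash_\base{Q}\alpha$, which is the required uniform proof.
  \end{itemize}
  The case of $\otimes$ uses the single clause $\enc\phi\supset\enc\psi\supset\alpha$ and the two-hypothesis version of the characterisation, obtained by the same argument with $\Gamma=\{\phi,\psi\}$.
\end{itemize}

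The main obstacle is the interplay between the semantics and the eigenvariable atom $\alpha$. The completeness direction requires that extensions $\base{Q}$ may contain \emph{nested, second-order} clauses such as $\enc\phi\supset\alpha$, and that fresh atoms behave generically. I would first make precise that policies range over arbitrary finite sets of clauses of the language in Section~\ref{sec:lp}, so that these canonical extensions are admissible. I would then prove the substitution lemma ($\base{P}\vdash_O G[\alpha]$ with $\alpha$ fresh implies $\base{P}\vdash_O G[A]$) by induction on uniform proofs. The identity lemma $\{\enc\phi\}\vdash_O\enc\phi$ is routine but must also be verified, handling \textsc{Instance} for the $\forall$-encodings.
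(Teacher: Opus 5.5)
Your proposal is correct. It shares the paper's skeleton: structural induction, the canonical extension $\base{P}\cup\{D\}$ with a fresh $\alpha$ for completeness, and instantiating $X$ by $A$ for soundness. It rests, however, on a different key lemma, namely the characterisation $\Gamma\supp_{\base{P}}\psi$ iff $\base{P}\cup\enc{\Gamma}\vdash_O\enc{\psi}$, for $\Gamma$ one or two subformulas and $\psi$ a subformula or an atom.

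The paper instead argues each case directly from the induction hypothesis, and that is where its argument slips. In the soundness directions for $\otimes$ and $\oplus$ it passes from the \emph{conditional} hypotheses $\phi_1,\phi_2\supp_{\base{Q}}A$ (resp.\ $\phi_i\supp_{\base{Q}}A$) to the \emph{categorical} $\base{Q}\vdash_O\enc{\phi_i}$, which does not follow. What is actually available is $\base{Q}\vdash_O\enc{\phi_1}\supset\enc{\phi_2}\supset A$ (resp.\ $\base{Q}\vdash_O\enc{\phi_i}\supset A$). That is exactly what your lemma delivers, and substituting $A$ for the eigenvariable followed by Lemma~\ref{lem:cut} then finishes the case.

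The same issue appears elsewhere in the paper's argument. In the completeness direction for $\otimes$, the paper claims $\phi_i\supp_{\base{Q}}\alpha$ for each $i$ separately. That fails, because firing $D$ needs both $\enc{\phi_1}$ and $\enc{\phi_2}$. Only your joint $\phi_1,\phi_2\supp_{\base{Q}}\alpha$ holds, and it is what the $\otimes$ clause actually consumes. In the implication case, the paper ``backchains'' on a derived implication rather than a program clause; your Augment-inversion plus Cut is the correct step there.

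The price of your route is that it makes explicit what the paper uses tacitly:
\begin{itemize}
\item induction over arbitrary nested second-order programs;
\item an identity lemma (the paper's ``Fact gives $\base{Q}\vdash_O\enc{\phi_1}$'' for non-atomic $\enc{\phi_1}$);
\item a substitution lemma for fresh atoms.
\end{itemize}

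One caveat concerns the identity lemma. It is only routine if two readings of the rules hold:
\begin{itemize}
\item Instance may instantiate $X$ by an arbitrary atom, in particular the current goal atom;
\item Backchain is read in its usual form for curried clauses $G_1\supset\cdots\supset G_n\supset A$.
\end{itemize}
Under the literal side condition ``fresh $\beta\neq A$'', the goal $\{\enc{B\oplus C}\}\vdash_O\enc{B\oplus C}$ is not derivable. With it, the theorem itself fails for $(B\oplus C)\Rightarrow(B\oplus C)$ over the empty policy. The paper's own proof instantiates with $A$, so you should state these readings of the rules before proving identity.
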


\begin{proof}
By simultaneous induction on the structure of $\phi$. We use
Proposition~\ref{prop:monotonicity} (monotonicity of support under policy
extension) and Lemma~\ref{lem:mono-lp} (monotonicity of logic programming) freely throughout:
\medskip

\noindent\textbf{Case} $\phi = A$ \textbf{(atomic).}
Immediate from the identification of $\vdash_\base{P} A$ with
$\base{P} \vdash_O A$ and the definition $\enc{A} = A$.

\medskip
\noindent\textbf{Case} $\phi = 0$ \textbf{(corruption).}

\smallskip
\noindent\emph{Only if.} Suppose $\supp_\base{P} 0$. By the $(0)$ clause,
$\supp_\base{P} A$ holds for every atomic $A$. By the atomic case,
$\base{P} \vdash_O A$ for every atomic $A$. In particular, taking a fresh atom
$\alpha$, we have $\base{P} \vdash_O \alpha$. By Generic,
$\base{P} \vdash_O \forall X.\, X$, which is $\enc{0}$.

\smallskip
\noindent\emph{If.} Suppose $\base{P} \vdash_O \forall X.\, X$. By Generic,
$\base{P} \vdash_O \alpha$ for any atom $\alpha$. By the atomic case, $\supp_\base{P}
A$ for every atomic $A$, so $\supp_\base{P} 0$ by the $(0)$ clause.

\medskip
\noindent\textbf{Case} $\phi = \phi_1 \Rightarrow \phi_2$ \textbf{(implication).}

\smallskip
\noindent\emph{Only if.} Suppose $\supp_\base{P}(\phi_1 \Rightarrow \phi_2)$.
By the $(\Rightarrow)$ clause, this means $\phi_1 \supp_\base{P} \phi_2$:
for any $\base{Q} \baseext \base{P}$, if $\supp_\base{Q} \phi_1$ then
$\supp_\base{Q} \phi_2$.

We must show $\base{P} \vdash_O \enc{\phi_1} \supset \enc{\phi_2}$, i.e.,
$\enc{\phi_1 \Rightarrow \phi_2}$. By Augment, it suffices to show
$\base{P} \cup \{\enc{\phi_1}\} \vdash_O \enc{\phi_2}$.

Set $\base{Q} := \base{P} \cup \{\enc{\phi_1}\}$. Since $\enc{\phi_1} \in
\base{Q}$, Fact gives $\base{Q} \vdash_O \enc{\phi_1}$, so by the induction
hypothesis $\supp_\base{Q} \phi_1$. Applying the hypothesis with this
$\base{Q}$ gives $\supp_\base{Q} \phi_2$, and the induction hypothesis then
yields $\base{Q} \vdash_O \enc{\phi_2}$, as required.

\smallskip
\noindent\emph{If.} Suppose $\base{P} \vdash_O \enc{\phi_1} \supset
\enc{\phi_2}$. We must show $\phi_1 \supp_\base{P} \phi_2$, i.e., that for
any $\base{Q} \baseext \base{P}$ with $\supp_\base{Q} \phi_1$ we have
$\supp_\base{Q} \phi_2$.

Let $\base{Q} \baseext \base{P}$ and suppose $\supp_\base{Q} \phi_1$. By
the induction hypothesis, $\base{Q} \vdash_O \enc{\phi_1}$. By Monotonicity
of proof-search, $\base{Q} \vdash_O \enc{\phi_1} \supset \enc{\phi_2}$.
Applying Backchain yields $\base{Q} \vdash_O \enc{\phi_2}$, and the
induction hypothesis gives $\supp_\base{Q} \phi_2$.

\medskip
\noindent\textbf{Case} $\phi = \phi_1 \otimes \phi_2$ \textbf{(conjunction).}

\smallskip
\noindent\emph{Only if.} Suppose $\supp_\base{P}(\phi_1 \otimes \phi_2)$.
We must show
\[
  \base{P} \vdash_O \forall X.\,
  \bigl((\enc{\phi_1} \supset \enc{\phi_2} \supset X) \supset X\bigr).
\]
By Generic, it suffices to take a fresh atom $\alpha$ and show
\[
  \base{P} \vdash_O
  (\enc{\phi_1} \supset \enc{\phi_2} \supset \alpha) \supset \alpha.
\]
By Augment, it suffices to show that
$\base{P} \cup \{D\} \vdash_O \alpha$
where $D := \enc{\phi_1} \supset \enc{\phi_2} \supset \alpha$.

Set $\base{Q} := \base{P} \cup \{D\}$. We claim that $\phi_i \supp_\base{Q}
\alpha$ for each $i = 1, 2$. To see this, let $\base{Q}' \baseext \base{Q}$
with $\supp_{\base{Q}'} \phi_i$. By the induction hypothesis,
$\base{Q}' \vdash_O \enc{\phi_i}$. Since $D \in \base{Q} \subseteq
\base{Q}'$, two applications of Backchain on $D$ yield $\base{Q}' \vdash_O
\alpha$, hence $\supp_{\base{Q}'} \alpha$. This confirms $\phi_i
\supp_\base{Q} \alpha$ for each $i$.

Now by the $(\otimes)$ clause applied to $\supp_\base{P}(\phi_1 \otimes
\phi_2)$ with the extension $\base{Q} \baseext \base{P}$ and atom $\alpha$:
since $\phi_1 \supp_\base{Q} \alpha$ and $\phi_2 \supp_\base{Q} \alpha$, we
obtain $\vdash_\base{Q} \alpha$, i.e., $\base{Q} \vdash_O \alpha$, as
required.

\smallskip
\noindent\emph{If.} Suppose $\base{P} \vdash_O \enc{\phi_1 \otimes \phi_2}$.
We must show that for any $\base{Q} \baseext \base{P}$ and atomic $A$, if
$\phi_1, \phi_2 \supp_\base{Q} A$ then $\vdash_\base{Q} A$.

Let $\base{Q} \baseext \base{P}$ and $A$ an atomic formula, and suppose $\phi_1,
\phi_2 \supp_\base{Q} A$. By the induction hypothesis applied to each,
$\base{Q} \vdash_O \enc{\phi_i}$ for $i = 1, 2$. By Monotonicity of
proof-search, $\base{Q} \vdash_O \enc{\phi_1 \otimes \phi_2}$. Applying
Instance with atom $A$ instantiating the second-order variable $X$, and then
Backchain twice using $\base{Q} \vdash_O \enc{\phi_i}$, we obtain
$\base{Q} \vdash_O A$.

\medskip
\noindent\textbf{Case} $\phi = \phi_1 \oplus \phi_2$ \textbf{(disjunction).}

\smallskip
\noindent\emph{Only if.} Suppose $\supp_\base{P}(\phi_1 \oplus \phi_2)$.
We must show
\[
  \base{P} \vdash_O \forall X.\,
  \bigl((\enc{\phi_1} \supset X) \supset (\enc{\phi_2} \supset X) \supset X\bigr).
\]
By Generic, take a fresh atom $\alpha$, and by Augment it suffices to show
$\base{P} \cup \{D_1, D_2\} \vdash_O \alpha$
where $D_i := \enc{\phi_i} \supset \alpha$.

Set $\base{Q} := \base{P} \cup \{D_1, D_2\}$. We claim $\phi_i \supp_\base{Q}
\alpha$ for each $i$. Let $\base{Q}' \baseext \base{Q}$ with $\supp_{\base{Q}'}
\phi_i$. By the induction hypothesis, $\base{Q}' \vdash_O \enc{\phi_i}$.
Since $D_i \in \base{Q} \subseteq \base{Q}'$, Backchain on $D_i$ yields
$\base{Q}' \vdash_O \alpha$, hence $\supp_{\base{Q}'} \alpha$.

So $\phi_i \supp_\base{Q} \alpha$ for $i = 1, 2$. By the $(\oplus)$ clause
applied to $\supp_\base{P}(\phi_1 \oplus \phi_2)$ with extension $\base{Q}$
and atom $\alpha$, we obtain $\vdash_\base{Q} \alpha$,
i.e., $\base{Q} \vdash_O \alpha$.

\smallskip
\noindent\emph{If.} Suppose $\base{P} \vdash_O \enc{\phi_1 \oplus \phi_2}$.
We must show that for any $\base{Q} \baseext \base{P}$ and atomic $A$, if
$\phi_i \supp_\base{Q} A$ for each $i$, then $\vdash_\base{Q} A$.

Let $\base{Q} \baseext \base{P}$ and $A$ an atomic formula, and suppose $\phi_i
\supp_\base{Q} A$ for $i = 1, 2$. By the induction hypothesis,
$\base{Q} \vdash_O \enc{\phi_i}$ for each $i$. By Monotonicity of
proof-search, $\base{Q} \vdash_O \enc{\phi_1 \oplus \phi_2}$. Applying
Instance with $A$ instantiating the second-order variable $X$, and then Backchain twice, we obtain $\base{Q} \vdash_O A$.

\medskip
\noindent\textbf{Case} $\phi = \neg \phi_1$ \textbf{(negation).}
Since $\neg \phi_1 = \phi_1 \Rightarrow 0$ by definition and
$\enc{\neg \phi_1} = \enc{\phi_1} \supset \forall X.\, X$, this case
reduces immediately to the implication case (with $\phi_2 = 0$) and the
corruption case.
\end{proof}

\begin{remark}
  The proof never invokes conjunction or disjunction in the logic programming
  language. The second-order variable $X$ in $\enc{\otimes}$ and $\enc{\oplus}$
  introduces a single fresh atom $\alpha$ via Generic and is discharged by Augment and
  Backchain alone. The stripped-down language of Section~\ref{sec:lp} is
  therefore not a restriction but a \emph{feature}: it makes the computational
  content of the support judgment maximally transparent.
\end{remark}

\section{Conclusion}
\label{sec:conclusion}
% ============================================================================

We have developed a framework for verifying \emph{robust
properties} of access control policies --- properties whose validity is
grounded not in the satisfaction of a complete model but in the inferential
structure of the policy itself, and which are therefore guaranteed to persist
under any future policy extension. The framework addresses a structural
limitation shared by all existing model-theoretic and logic-programming-based
approaches: their verification judgments are non-compositional, tied to a
fixed, fully-determined policy that does not represent the iterative,
component-wise reality of policy development.

The central semantic object is the support judgment $\supp_\base{P} \phi$,
stating that a policy $\base{P}$ has the robust property $\phi$.
This is an instance of \emph{base-extension semantics}~\cite{sandqvist-intuitionistic,gheorghiu-gu-2025},
in which the meaning of a formula is given by its inferential behaviour
across an ordered family of bases rather than by its truth in a single
structure. The ordered family here is the poset of logic programs under
extension, and the key property --- monotonicity of support
(Proposition~\ref{prop:monotonicity}) --- is a direct consequence of the
universal quantification over that poset. Validity in the empty policy,
$\supp_\emptyset \phi$, recovers intuitionistic logic~\cite{gheorghiu-gu-pym-tableaux-2023},
so the present framework is a parametric generalisation: we move to an arbitrary program $\base{P}$, and the
effect is to relativise the logic to the policy's inferential content.

The operational result --- that $\supp_\base{P} \phi$ if and only if
$\base{P} \vdash_O \enc{\phi}$ (Theorem~\ref{thm:soundness}) --- shows
that the second-order quantification over extensions collapses to ordinary
proof-search in a stripped-down Horn clause language extended with
second-order atomic quantifiers. The encoding $\enc{\cdot}$ is a
continuation-passing translation: each connective of the robust property
language is mapped to its elimination form, and the eigenvariable
mechanism of uniform proof-search realises the universal quantification
over atoms that the semantic clauses require. This is an instance of a broader phenomenon: the operational content of
a semantic judgment, defined by quantification over an infinite structured
family, can collapse to finite proof-search when the judgment is
\emph{inductive in the right sense} --- here, by simultaneous induction
on the structure of $\phi$.

\subsection{Relation to Denotational and Categorical Semantics}

The base-extension framework has a natural categorical reading, developed
in detail by Pym et al.~\cite{pym-ritter-robinson-2025}.
Policies form a preorder under extension, and the support judgment
$\supp_{\base{P}} \phi$ can be read as a Kripke-style forcing relation
over this preorder: the monotonicity condition is exactly the persistence
condition, and the quantification over extensions in the clauses for
$\Rightarrow$, $\otimes$, and $\oplus$ corresponds to the standard
exponential and product structure in a presheaf category over the policy
poset. Pym et al.\ reconstruct Sandqvist's base-extension semantics for
intuitionistic propositional logic as a presheaf model over a category of
bases and contexts, establishing soundness and completeness categorically
and showing that validity in the empty base recovers the standard
bicartesian closed structure of NJ. The key subtlety they identify is
that disjunction is \emph{not} the categorical coproduct in the presheaf
topos: instead it is the second-order or elimination-rule formulation
$\forall p.\,(\phi \supset p) \supset (\psi \supset p) \supset p$, which
coincides with the join in a sublocale --- a nucleus --- on the locale of
upward-closed subsets of the base poset. Soundness in this setting is not
a consequence of a universal property of coproducts, but of a structural
induction over formula complexity, which is precisely the pattern of our
own Theorem~\ref{thm:soundness}.

The present framework inherits this structure directly. Our encoding
$\enc{\phi \oplus \psi} = \forall X.\,(\enc{\phi} \supset X) \supset
(\enc{\psi} \supset X) \supset X$ is the same second-order elimination
form, and the eigenvariable mechanism of Generic realises, operationally,
the internal quantification over atoms that Pym et al.\ interpret
categorically via the discrete category of atomic propositions $\mathbf{A}$
and the right adjoint $\forall_{\mathbf{A}}$. Making the categorical
structure of the present framework fully explicit --- identifying the
presheaf category, the nucleus, and the sublocale corresponding to robust
property validity --- is a natural next step, and the machinery of
Pym et al.\ provides the right setting for it.

\subsection{Resource Sensitivity}
A particularly pressing extension is to resource-sensitive access control.
Many realistic policies involve constraints that are not simply
monotone: a delegation token may be single-use; a review quota may be
capacity-bounded; a principal's permissions may depend on the current
state of a resource. These are naturally modelled by linear or bunched
logic, in which the standard weakening and contraction rules are absent
or restricted. Gheorghiu et al.~\cite{gheorghiu-gu-pym-studlogica-2024,GuGheorghiuPym2025,gheorghiu-gu-pym-mfps-2024} have
shown that base-extension semantics extends naturally to both linear and
bunched logics, and that resource-sensitive distributed systems can be
modelled in that setting. Extending the present framework to a
resource-sensitive policy language --- in which policy extension may
\emph{consume} tokens rather than merely add rules, and in which the
monotonicity of support is replaced by an appropriate resource-indexed
condition --- is therefore a natural and tractable direction.

\subsection{Implementation and Integration}
The soundness and completeness theorem reduces robust property
verification to executable proof-search, and a prototype implementation
in a standard logic programming system (e.g., $\lambda$Prolog or
Twelf) is the most immediate practical step. The framework is
complementary to deployed tools such as Zelkova~\cite{bouchet2020} and
Margrave~\cite{verification-change-impact}: robust properties need not
be re-verified after policy extension, reducing the burden on downstream
SMT or BDD solvers to only those properties directly affected by a
change. Integration with ASP-based verifiers~\cite{ahn-asp-xacml} is
similarly natural, since the stable-model computation operates on the
fully-resolved policy while the present framework certifies which
properties can be carried forward without recomputation.

\bibliographystyle{IEEEtran}
\bibliography{bib}

\end{document}